\documentclass[prd,aps,amsfonts,notitlepage,longbibliography,twocolumn]{revtex4-1} %eqsecnum

%% package added by Chao
\usepackage{graphicx}% Include figure files
\usepackage{tabularx}
\usepackage{braket}
\usepackage{amssymb}
\usepackage{amsmath}
\usepackage{bm}
\usepackage[driverfallback=dvipdfm]{hyperref}
\hypersetup{colorlinks=true,breaklinks,urlcolor=blue,linkcolor=blue,citecolor=blue}

\newcommand{\revise}[1]{#1}
\newcommand{\red}[1]{#1}

\newcommand{\comment}[1]{}
\usepackage{xcolor}
\definecolor{ginger}{rgb}{0.69, 0.4, 0.0}

% math new command
\newcommand{\lr}[1]{\left( #1\right)}
\newcommand{\mlr}[1]{\left[ #1\right]}

\newcommand{\alr}[1]{\left\langle #1\right\rangle}
\newcommand{\norm}[1]{\left\lVert#1\right\rVert}
\newcommand{\abs}[1]{\left\lvert#1\right\rvert}

\newcommand{\ii}{\mathrm{i}}
\newcommand{\ee}{\mathrm{e}}

\newcommand{\where}{\quad {\rm where}\quad}
\newcommand{\order}{\mathrm{O}}
\newcommand{\torder}{\tilde{\order}}

\newcommand{\OO}{\mathcal{O}}

\newcommand{\xalpha}{a}
\newcommand{\xbeta}{b}
\newcommand{\kzero}{\ket{\bm{0}}}
\newcommand{\cl}{_{\rm cl}}
\newcommand{\powerlaw}{\gamma}

\renewcommand{\thesection}{\arabic{section}}

\makeatletter
\renewcommand{\p@subsection}{}
\renewcommand{\p@subsubsection}{}
\makeatother

\usepackage{amsthm}
\newtheorem{thm}{Theorem}
%\numberwithin{thm}{section}

\begin{document} 

\title{Fast and Accurate Greenberger–Horne–Zeilinger Encoding Using All-to-all Interactions}
\author{Chao Yin}\email{chao.yin@colorado.edu}
\affiliation{Department of Physics and Center for Theory of Quantum Matter, University of Colorado, Boulder CO 80309, USA}

\date{\today}

\begin{abstract}
    The $N$-qubit Greenberger–Horne–Zeilinger (GHZ) state is an important resource for quantum technologies. We consider the task of GHZ encoding using all-to-all interactions, which prepares the GHZ state in a special case, and is furthermore useful for quantum error correction, \revise{interaction-rate enhancement, and transmitting information using power-law interactions}. The naive protocol based on parallelizing CNOT gates takes $\order(1)$-time of Hamiltonian evolution.
    In this work, we propose a fast protocol that achieves GHZ encoding with high accuracy. The evolution time $\order(\log^2N/N)$ almost saturates the theoretical limit $\Omega(\log N/N)$. Moreover, the final state is close to the ideal encoded one with high fidelity $> 1-10^{-3}$, up to large system sizes $N\lesssim 2000$. The protocol only requires a few stages of time-independent Hamiltonian evolution; the key idea is to use the data qubit as control, and to use fast spin-squeezing dynamics generated by e.g. two-axis-twisting.
\end{abstract}

\maketitle

\emph{Introduction.}---
Quantum technologies offer great possibilities to perform information processing tasks far more efficiently than classical machines. 
For example, quantum computers are potentially able to factorize large numbers \cite{Shor99} or solve linear systems of equations \cite{HHL09} exponentially fast. As of estimating unknown parameters, quantum metrology can achieve a Heisenberg-limited precision that surpasses classical resources \cite{metro_science04}, using entangled states like the Greenberger–Horne–Zeilinger (GHZ) state \cite{GHZ89} and spin-squeezed states \cite{squeeze93,squeeze_rev11,squeeze_power_Ryd23,OAT_Ryd_Adam23,OAT_ion1d_23}. 
By far, quantum computation and squeezing physics are rather disconnected: Although squeezing implies entanglement \cite{squeeze_entan_prl05,squeeze_entan_pra06}, one usually does not care about the precise form of the squeezed state in the many-body Hilbert space: one merely uses a single parameter to describe the squeezing strength, which is sufficient to infer the ultimate precision in metrology \cite{squeeze_rev11}. In contrast, quantum computation aims to produce precise states from which one can deduce e.g. the precise factors of a large number. As a result, quantum computation is usually modeled by digitized quantum circuits, instead of analog Hamiltonian evolution that is more suitable for squeezing.

However, quantum circuits may not exploit the full power of current quantum platforms, because many of them are
naturally equipped by Hamiltonians with \emph{high connectivity}: Rydberg atoms interact with strength $r^{-\powerlaw}$ that decays as a power law with distance $r$, where the exponenent $\powerlaw=3,6$ \cite{Rydberg_rev20}; 
moreover, all-to-all interactions \begin{equation}\label{eq:all2allH=}
    H(t)=\sum_{i<j} J_{ij}^{\xalpha\xbeta}(t) X^\xalpha_i X^\xbeta_j, \where \abs{J_{ij}^{\xalpha\xbeta}(t)}\le 1,
\end{equation} 
\footnote{Here $H$ is normalized, and $X_i^\xalpha$ ($\xalpha=1,2,3$) are Pauli matrices on qubit $i$; we will also use notations $X_i,Y_i,Z_i$.} arise in optical cavities \cite{cavity_metro_scram23,cavity_graphstate24}, almost in trapped ions with $\powerlaw\in [0, 3]$ \cite{ion_longrange12,ion_rev21}, and are proposed for superconducting circuits \cite{all2all_supercond}. With such Hamiltonian, a single qubit interacts with many others simultaneously, which potentially enhances the information processing speed. This is particularly demanding in the current
noisy intermediate-scale quantum (NISQ) era, where one wants to perform computation faster than the decoherence timescale, when useful information would be lost into the environment. Indeed, for any power-law exponent $\powerlaw<2d+1$, protocols have been constructed that transmit quantum information through space in a way that is asymptotically faster than local quantum circuits/ Hamiltonians in $d$ spatial dimensions \cite{power_hierarchy,power_GHZ21,power_yifan21}. For $\powerlaw>d$, there are also lower bounds on the protocol time that matches the fast protocols \cite{power_chen19,power_KSLRB20,power_hierarchy,power_LRB21}; see the recent review \cite{ourreview}. 
However, such speed limits are much less understood for $\powerlaw<d$ \cite{Wprotocol_gorshkov20}, where the system is closer to the all-to-all limit. Here, the notion of spatial locality is challenged by a diverging local energy density; although progress have been made regarding Frobenius operator growth \cite{all2all_SYK20,all2all_yin20,power_KSOTOC21}, this is not directly related to tasks of e.g. preparing a certain entangled state.

In this work, we focus on the \revise{task to perform GHZ encoding in $N+1$ qubits $i\in\Lambda:=\{0,1,\cdots,N\}$}: For any quantum state $\alpha\ket{0}+\beta\ket{1}$ with $\abs{\alpha}^2+\abs{\beta}^2=1$ originally contained in the data qubit $i=0$, \revise{a GHZ encoding} unitary $U$ encodes the quantum data into the GHZ subspace of all qubits: \begin{equation}\label{eq:GHZencode}
U \ket{\alpha,\beta;0} \approx \ket{\alpha,\beta},\quad \forall \alpha,\beta
\end{equation}
where $\approx$ allows some error quantified shortly, and $\ket{\alpha,\beta;0}:= \lr{\alpha\ket{0}+\beta\ket{1}}_0\otimes \kzero,\ket{\alpha,\beta}:=\alpha\kzero_\Lambda+ \beta\ket{\bm{1}}_\Lambda$.
Here $\kzero$ ($\kzero_\Lambda$) for example is the all-zero state on qubits $\{1,\cdots,N\}$ (all qubits $\Lambda$). \revise{We ask the following question:} 

\vspace{5pt}

\revise{{\itshape What is the shortest evolution time $T$ for an all-to-all Hamiltonian evolution \eqref{eq:all2allH=} to achieve GHZ encoding $U$?}}

\vspace{5pt}

To the best of our knowledge, the previous fastest protocol %(including approximate ones, as we use  in \eqref{eq:GHZencode} and will quantify the  shortly)
is to apply Hamiltonian $H=Z_0(X_1+\cdots+X_N)$ for time $T=\pi/4=\Theta(1)$ \footnote{Throughout, we use big-O notations on the scaling at $N\rightarrow\infty$: $f=\order(g)$ ($f=\Omega(g)$) means $f\le cg$ ($f\ge cg$) for some constant $c$ independent of $N$, and $\Theta(\cdot)$ means both $\order(\cdot)$ and $\Omega(\cdot)$. Tildes in e.g. $\tilde{\mathrm{O}}$ means hiding polylogarithmic factors.}, before locally rotating all $i>0$ by $\ee^{-\ii\pi X_i/4}$. Here we assume local \revise{on-site} rotations are arbitrarily fast%and not counted in the total evolution time
, since they do not change entanglement, and are usually much faster than interactions in reality \cite{Rydberg_rev20,ion_rev21,cavity_Zeyang22,cavity_BCS24}; \red{see Supplemental Material (SM) \footnote{SM also contains %symmetry analysis 
the proof of \eqref{eq:lowerbound} adapted from \cite{ourreview}, and further details of our protocol.%, and additional numerical data.
} for further justification \revise{of the setup, where our protocol below is reformulated to contain no local rotations}}. Alternatively, this exact protocol can be viewed as applying CNOT gates in parallel with the data qubit as control. However, there is a large gap between this $\Theta(1)$ runtime and a lower bound \begin{equation}\label{eq:lowerbound}
    T=\Omega(\log N/N),
\end{equation} 
for generating $U$ by a possibly time-dependent Hamiltonian \eqref{eq:all2allH=},
where the latter vanishes quickly with $N$%, and also holds for approximate protocols with $\order(1)$ infidelity (defined in \eqref{eq:eps=} below)
. \red{Although this is because the above protocol only uses $\order(N)$ out of $\Theta(N^2)$ pairs of interactions%in $H(t)$
, it was not clear how to \revise{utilize the additional} couplings to speed up the evolution while maintaining the final state to be digital as in \eqref{eq:GHZencode}.

Here, we resolve this gap by providing a protocol (\eqref{eq:whole} below) with} runtime \begin{equation}\label{eq:T=log2N}
    T=\order(\log^2N/N),
\end{equation} 
that almost saturates the bound \eqref{eq:lowerbound}.
This is the first protocol with such a small runtime $T=\torder(1/N)$ for any digital quantum information processing task \footnote{To the best of our knowledge, the only previously-known protocol with a vanishing runtime at large $N$, is a $W$-state \cite{W_state} generation protocol \cite{Wprotocol_gorshkov20} with $T=\Theta(1/\sqrt{N})$.}. Our main idea is to generate many-body entanglement using fast spin-squeezing protocols like two-axis-twisting (TAT) \cite{squeeze93}, which generates extreme squeezing in short time $T=\torder(1/N)$. 
Although the squeezing subroutines make our protocol inexact, we carefully design ``unsqueezing'' stages that cancel the unwanted squeezing effects with high precision, bridging the gap between analog and digital quantum evolution. Remarkably, the error is very small for all system sizes $N\lesssim 2000$ studied numerically, quantified by the worst-case infidelity %of $U \ket{\alpha,\beta;0}$ 
with respect to the target state
%\ket{\alpha,\beta}$
:\begin{equation}\label{eq:eps=}
    \epsilon := 1-\min_{\alpha,\beta} \abs{\bra{\alpha,\beta}U \ket{\alpha,\beta;0}}^2.
\end{equation}
We have $\epsilon< 10^{-3}$ when the numerical coefficient in \eqref{eq:T=log2N} is $\approx 1$, and can be further improved systematically by increasing $T$.

\emph{Implications.}---
%Before describing our protocol in detail, we first discuss its implications. 
In the special case $\alpha=\beta=1/\sqrt{2}$, our protocol prepares the GHZ state with high fidelity.
GHZ state generation have been studied theoretically \cite{GHZ_diffO1time19,GHZ_postselect20,GHZ_Jiazhonghu21,GHZ_OAT22,GHZ_cubic_intera23,GHZ_tear23,GHZlike24} and experimentally with $N\sim 20$ \cite{GHZ_ion11,GHZ_ion21,GHZ_photon18,GHZ_photon18_12,GHZ_SC19,GHZ_SC21,GHZ_Rydberg19,GHZ_Ryd24}. 
However, all existing protocols either take long evolution time $T=\Theta(1)$, or produce GHZ-like states, whose fidelity with the true GHZ state is not as high as ours %($\epsilon< 10^{-3}$) 
at large $N\gtrsim 10^3$. For example, \cite{GHZlike24} proposes a $T=\order( \log N/N)$ protocol %using all-to-all interactions 
that prepares a GHZ-like state of the form $c_0\lr{\ket{D^N_0} + \ket{D^N_N}}/\sqrt{2} + c_1 \lr{\ket{D^N_1} + \ket{D^N_{N-1}}}/\sqrt{2} + \cdots$, where $\ket{D^N_k}$ is the Dicke state that is the equal superposition of all computational basis states with $k$ ones and $N-k$ zeros (so $(\ket{D^N_0} + \ket{D^N_N})/\sqrt{2}$ is the GHZ state). \cite{GHZlike24} achieves $\epsilon = 1-\abs{c_0}^2\approx 0.65$ for $N\sim 100$, %I.e. the GHZ-like state is a superposition of two wave packets concentrated around the north and south poles in the Dicke manifold. 
%This protocol has several drawbacks: (\emph{1}) Since it is only GHZ-like, although it has nearly maximal quantum Fisher information \cite{metro_science04} for sensing $Z$ field like the ideal GHZ state, it is in general unclear how to saturate this metrological limit by an efficient measurement strategy \cite{HL_sampling24}. 
so when used for metrology, this GHZ-like state may not be as clean as the exact one \cite{HL_sampling24}. Moreover, it only prepares one state and does not achieve GHZ encoding for any $\alpha,\beta$. %Indeed, the protocol Hamiltonian in \cite{GHZlike24} is invariant when permuting the qubits, so does not map the state on a particular qubit to global one like \eqref{eq:GHZencode}. 
Nevertheless, the spin-squeezing idea in \cite{GHZlike24} inspired our work that overcomes the above challenges. %by e.g. using the data qubit as control for (\emph{3}).

Beyond preparing GHZ states, our protocol can be used to enhance interaction rate: % More concretely, consider two copies of the original setup, where one wants to engineer a gate between two qubits $0,0'$, and the rest $2N$ qubits in $\Lambda\cup \Lambda'$ are in the all-zero state. One can first apply our protocol $U_\Lambda\otimes U_{\Lambda'}$ to 
Given two qubits, one can first encode each of them into the GHZ subspace of a qubit ensemble of size $\sim N$, and then the two ensembles can interact with a strength $\sum_{ij}Z_i Z_j\sim N^2\gg 1$ due to their large polarizations. By reversing the GHZ encoding procedure (i.e. uncomputing)  and adding fast single-qubit rotations, this implements an arbitrary two-qubit gate in time $2T+\order(1/N^2)=\torder(1/N)$. This is much faster than $\Omega(1)$ time for directly turning on the interaction between the two. The gate error $\sim \epsilon/N$ is also \red{extremely} small, as we will see. %the $\alpha$-part $\psi$ of our encoded state, for example, has polarization $\alr{\sum_i Z_i}_\psi=N-\order(\epsilon)$. Here we have defined $\alr{\OO}_\psi:=\alr{\psi|\OO|\psi}$.

The GHZ subspace can be viewed as a quantum error correction (QEC) code that corrects bit-flip errors with maximal distance $N+1$%, mimicking the classical repetition code
. %This is called spin-cat code in \cite{spincat_code24}. 
Concatenating such spin-cat codes \cite{spincat_code24} as in the Shor code \cite{shor_code95} also corrects phase errors, leading to a high threshold for quantum fault-tolerance \cite{spincat_code24}. Our protocol can be used to encode/decode such QEC codes in a fast way, where decoherence errors have little time to accumulate during the process \cite{GHZlike24}; \revise{see SM for a quantitative analysis}. 
%Quantitatively, we expect $\norm{\delta \rho}_1\sim \Gamma T$ for error $\delta \rho$ on the density matrix at weak decoherence, where $\Gamma$ is a global decoherence rate time-averaged over the process. %Although our protocol generates almost extreme squeezing (as we will see) that dephases much faster than product states, the dephasing rate is at  
%As we will see, for most of the time our protocol stays in a very squeezed state, which is nearly the most vulnerable state under decoherence, so $\Gamma\approx \Gamma_{\rm max}$ is roughly the maximal possible decoherence rate. Since the GHZ state also decoheres in the fastest possible way, a similar rate holds for the parallelizing-CNOTs protocol. Therefore, comparing to that protocol, our error $\norm{\delta \rho}_1$ is much smaller by a factor of $T=\torder(1/N)$. 
As a remark, there may also exist ``gate'' errors due to imprecise control in the protocol, and a small change \red{$\delta T=\tilde{\Theta}(1/N)$} in evolution time \revise{that is comparable to the whole $T$} is already detrimental. As a result, there is a tradeoff between these different error types, and one may want to optimize over the parameters when implementing our protocol in reality.

\revise{Our result strongly suggests that $\tilde{\Theta}(N^{\frac{\gamma}{d}-1})$ time is sufficient to transmit information between any pair of qubits using interactions bounded by a power law $\abs{J_{ij}^{\xalpha\xbeta}(t)}\le r_{ij}^{-\gamma}$, because one can just apply our GHZ encoding protocol \eqref{eq:all2allH=} normalized by the weakest coupling strength %$N^{-\gamma/d}$ 
at $r_{ij}=N^{1/d}$. Although our protocol is not exact, the transmitted signal approaches unity at larger system size in numerics (see SM). If this continues to hold in the asymptotic limit $N\rightarrow \infty$, this would saturate the signaling bound derived in \cite{Wprotocol_gorshkov20} for $\gamma<d$ and essentially close the problem of generalizing Lieb-Robinson bounds to power-law interacting systems \cite{power_LRB21,ourreview} (see Fig.~\ref{fig:cubic}(b)). }

\begin{figure}
    \centering
    \includegraphics[width=0.45\textwidth]{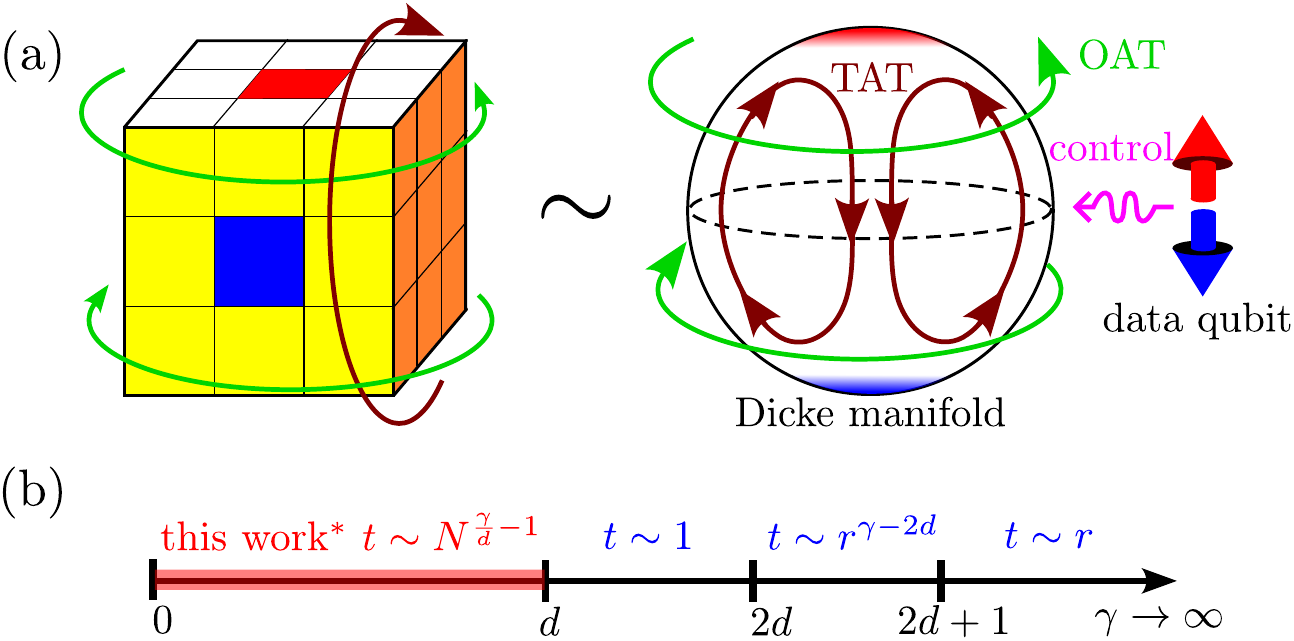}
    \caption{\revise{(a)} Our GHZ encoding protocol (right) shares the similar spirit of the Rubik's Cube (left), where %intriguing color patterns
    complicated tasks are decomposed to a set of fast basic operations \revise{including spin-squeezing dynamics}. %, \revise{where } The quantum protocol utilizes the data qubit to control the other $N$ qubits, and fast spin-squeezing dynamics. % problem consists of $N$ qubits supporting a Dicke manifold, and one qubit originally containing the quantum data. GHZ encoding (sketched by the red/blue colors) is achieved by a finite round of \emph{fast} basic operations, including interaction between the two subsystems, and unitaries on the Dicke manifold from OAT/TAT. }
    \revise{(b) Fastest possible signaling time $t$ across distance $r$ in power-law interacting systems, where polylogarithmic factors in $r,N$ are ignored \cite{power_hierarchy,power_GHZ21,power_yifan21,power_chen19,power_KSLRB20,power_LRB21,Wprotocol_gorshkov20,ourreview}. Our protocol strongly suggests (but does not rigorously prove) the speed limit shown in red for power-law exponent $\gamma<d$.}
    }
    \label{fig:cubic}
\end{figure}

\emph{The protocol and building blocks.}---
Our GHZ encoding unitary $U$ % involves several stages composed of 
consists of a few simple subroutines:
\begin{equation}\label{eq:whole}
    U = S_{\tau_3} \lr{R^Z_{\pi/4} O_{\pi/4} R^X_{-\pi/2}} S_{-\tau_2}\, \lr{C_\phi\, S_{\tau_1}},
\end{equation}
where $\tau_p\,(p=1,2,3)$ and $\phi$ are 
parameters that we tune and optimize. The high-level idea of \eqref{eq:whole} is analogous to Rubik's Cube shown in Fig.~\ref{fig:cubic}(a): we first identify the basic operations, and then decompose complicated tasks into them. In our problem, the GHZ encoding task is decomposed to \emph{fast} basic operations defined by
\begin{subequations}
    \begin{align}
    &\red{R^{X}_{\phi} :=\, \ee^{\ii \frac{\phi}{2} X}, \quad R^{Z}_{\phi} :=\, \ee^{\ii \frac{\phi}{2} Z},} & C_\phi :=&\, \ee^{\ii \frac{\phi}{2} Z_0\otimes X}, \label{eq:controlrotation} \\
    &S_\tau:=\, \ee^{-\ii \tau \frac{\log N}{N} H_{\rm TAT}}, & O_\phi :=&\, \ee^{-\ii \frac{\phi}{4N} Z^2}, \label{eq:OAT}
\end{align}
\end{subequations}
where $H_{\rm TAT}=XY+YX$, and \red{$X:=\sum_{i=1}^N X_i$ (excluding site $0$; similar for $Y,Z$)}%which are also denoted by $X,Y,Z$%for $\xalpha=1,2,3$ respectively
.
It is useful to consider them as acting on the Dicke manifold (DM) $\mathrm{Span}(\ket{D^N_k}:k=0,1,\cdots,N)$ of the $N$ qubits excluding $0$, which at $N\rightarrow\infty$ becomes a semiclassical phase space: a sphere. For example, $R^Z_{\phi}$ is just angle-$\phi$ rotation from east to west on the sphere, where we set $Z=N$ as the north pole.
$C_\phi$ is controlled $X$-rotation, where the sphere rotates in opposite directions depending on whether the control qubit $i=0$ is in $\ket{0}$ or $\ket{1}$%, the other qubits are rotated by angle $\phi$ either clockwise or counterclockwise; this is the only building block that acts nontrivially on qubit $0$.
. The TAT unitary $S_\tau$ generates spin squeezing %(the symbol $S$ stands for ``squeezing'') 
for initial product state $\kzero$. In particular,  extreme squeezing occurs at time $\tau=\tau_{\rm min}\approx1/8$ \cite{OAT2TAT}, which can be understood by the semiclassical trajectories as reviewed in SM. $O_\phi$ is the one-axis twisting (OAT) unitary \cite{squeeze93} 
that can also generate squeezing% for initial product states polarized near the equator $Z\approx 0$. However, this squeezing property is not what we will use in our protocol. Instead, we 
. However, here it is more useful to interpret it as a relative angle-$\phi$ rotation that rotates the north and south semispheres in opposite directions. %In particular, near the north pole the rotation is from west to east with angle $\phi$, while the south pole rotates in the opposite direction.

\emph{Combining building blocks.}---
We use a normalized parameter $\theta$ to express the controlled-rotation angle $\phi$: \begin{equation}\label{eq:thetalogN}
    \phi = \theta \lr{\log^2N}/N.
\end{equation}
As we will see, the parameters $\tau_p,\theta$ do not scale with $N$, so the total time $T$ of our protocol \eqref{eq:whole} is dominated by $\theta$: \begin{equation}\label{eq:T=theta}
    T=\theta \frac{\log^2N}{2N}+ \frac{\pi}{8N} + \frac{\log N}{N}\sum_{p=1,2,3} \tau_p \approx \theta \frac{\log^2N}{2N},
\end{equation}
at sufficiently large $N$. %,where the rotations $R$ are assumed to be instantaneous.
We have combined the subroutines into four stages in \eqref{eq:whole}:%, and we present the ideas for them one by one:

\emph{1. Squeeze-to-separate $C_\phi\, S_{\tau_1}$:} In the ultimate encoded state $\ket{\alpha,\beta}=\alpha \ket{0}_0\otimes \kzero + \beta \ket{1}_0\otimes \ket{\bm{1}}$, the polarization of the $N$ qubits depend drastically on the state of the control qubit $0$%,  are either very positively polarized or very negatively polarized
. In other words, the two parts ($\alpha$-part and $\beta$-part) of the state are supported in faraway regions on the DM for the $N$ qubits. Since the two parts have the same initial support $\kzero$, the protocol needs to first \emph{separate} them to \emph{disjoint} regions, before pulling the two supports faraway from each other. The naive way to separate is just a controlled rotation: $C_\phi \lr{\alpha\ket{0}+\beta\ket{1}}_0\otimes \kzero = \alpha\ket{0}_0\otimes \ket{\bm{\phi}}+\beta\ket{1}\otimes \ket{\bm{-\phi}}$; see \cite{similar_separate_idea16} for similar ideas. However, the spin-coherent states $\ket{\bm{\pm \phi}}$ have quantum fluctuations $\Delta Y\sim\sqrt{N}$, so they have disjoint support on the size-$N$ DM only after $\phi\gtrsim 1/\sqrt{N}$, which would lead to a final $T\gtrsim 1/\sqrt{N}$ \footnote{We expect that this naive separation strategy (with the latter stages adjusted accordingly) also yields an approximate GHZ encoding protocol with high accuracy, similar to what we will show for \eqref{eq:whole}. The $T=\torder( 1/\sqrt{N})$ runtime still beats the known protocols with $T=\Theta(1)$. }. 

Here $C_\phi\, S_{\tau_1}$ first squeezes the state using TAT to reduce the quantum fluctuation, and then controlled-rotates, as shown in Fig~\ref{fig:idea}(a)%, where the red initial state is squeezed to the orange one, and then evolved to the green ones
. Naively, at $\tau_1=\tau_{\rm min}\approx 1/8$ the squeezing is extreme $\Delta Y\sim 1$, so rotation angle $\phi\sim (\log N)/N$
suffices. Here the $\log N$ factor is such that the distance between the two supports is $\log N$-times larger than their fluctuation width, so that their overlap is inverse-polynomially small in $N$ (assuming the decay over distance is Gaussian for example). However, we add one more factor of $\log N$ in \eqref{eq:thetalogN}. The reason is that at extreme squeezing, the wave packet is also extremely stretched along the perpendicular direction: $\Delta X \sim N$, and it is hard to refocus such an expanded wave packet back to spin-coherent states%(the final goal is just superposition of two spin-coherent states!)
. Therefore, we desire $\tau_1<\tau_{\rm min}$ so that the state has not evolved to extreme squeezing. In this case, numerics in SM shows that $\Delta Y\sim \log N$ for nonvanishing but small $\abs{\tau-\tau_{\rm min}}$%. Therefore, the angle \eqref{eq:thetalogN} needs to be $\log N$ times larger than the naive one.
, leading to the extra $\log N$.

\emph{2. Pulling-away $S_{-\tau_2}$:} %Having separated the wave packets to disjoint regions on the DM based on the control qubit $0$
We then pull the two parts far away from each other until they become antipodal on the sphere. This can be done in a fast way again using % the saddle point structure of 
TAT. Focusing on the centers of the two green regions in Fig.~\ref{fig:idea}(a), they are initially separated in the $y$ direction with distance $\approx 2N \phi$, so we want to reverse the direction of the TAT dynamics %comparing to the first stage 
to stretch (instead of squeeze) the $y$ direction. After time $\tau_2>0$, $S_{-\tau_2}$ maps the two wave-packet centers to the $Y=\pm N$ antipodal points, while the two states become squeezed in the $x$ direction, as shown by blue in Fig.~\ref{fig:idea}(a). Intriguingly, the first two stages above are analogous to the method of signal amplification using a time-reversed interaction (SATIN) \cite{SATIN16,SATIN16_Dur,SATIN_robust17,SATIN_exp22}, where the ``signal'' that $C_\phi$ imprints on the $N$ qubits is amplified by the squeezing unitaries that sandwich it, \revise{thereby requiring a
shorter evolution during $C_\phi$}.

\begin{figure}
    \centering
    \includegraphics[width=0.45\textwidth]{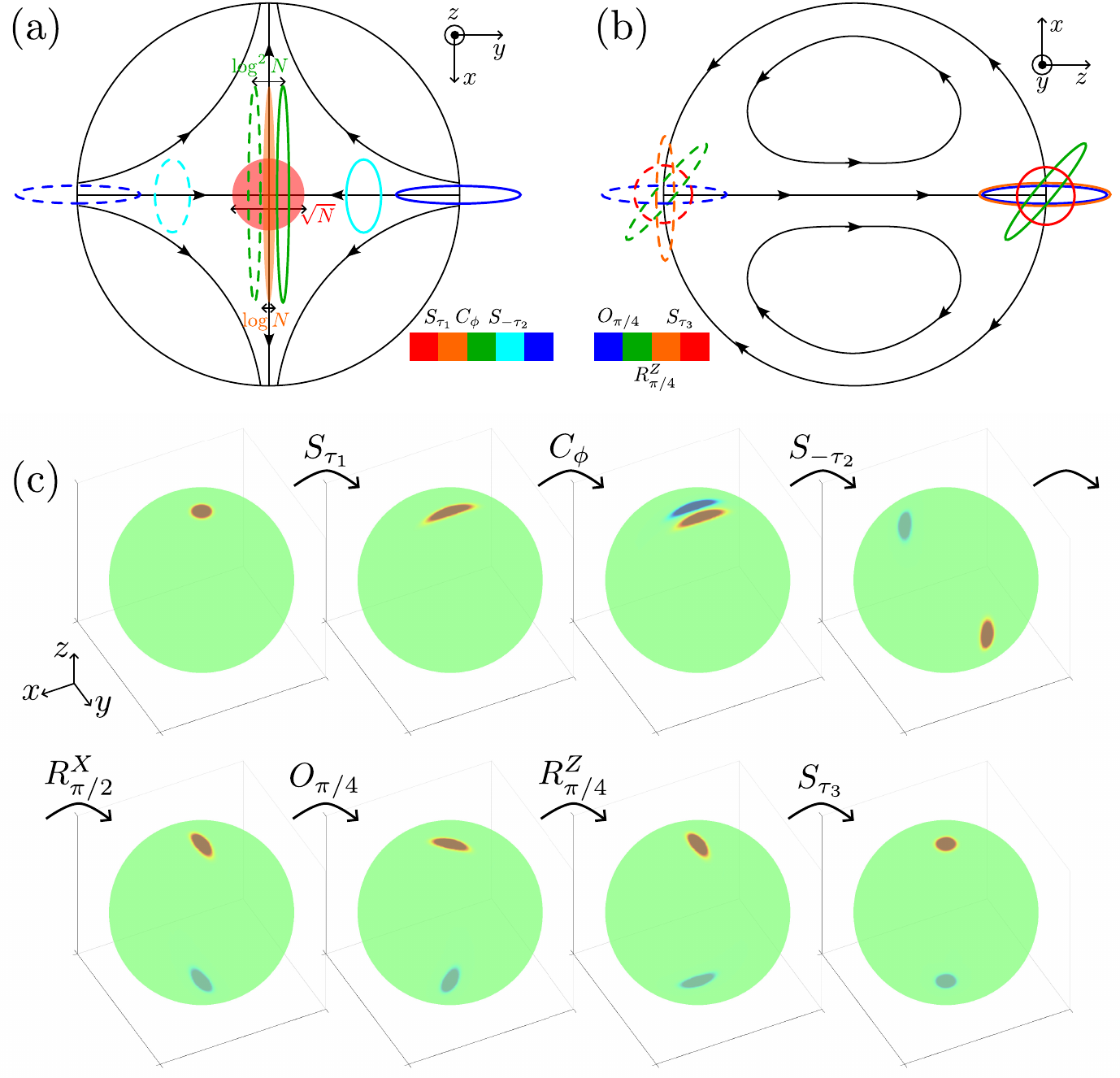}
    \caption{(a,b) Sketch of our protocol on the $N$-qubit DM, with the first stages in (a) and latter stages in (b). The colors represent different stages. For example, the red shaded circle represents the initial state $\kzero$, which is evolved by $S_{\tau_1}$ to the orange shaded oval representing a squeezed state. After a controlled rotation the wave function separates to two parts shown by solid ($\alpha$-part) or dashed ($\beta$-part) borderlines%, corresponding to the control qubit being $\ket{0}$ or $\ket{1}$
    . The black lines show classical trajectories of $H_{\rm TAT}$ with positive $\tau$. States at the boundary of the shown semisphere %, like the blue states, 
    have parts outside the boundary, which should be thought of as folding onto the other semisphere. %As shown in the upper-right, (a) and (b) are viewing from different directions, and t
    The final (blue) stage of (a) is rotated by $R^X_{-\pi/2}$ to get the first (blue) stage of (b). %The two red circles in (b) represent the final state that approximately achieves GHZ encoding. 
    (c) Husimi distribution of the state on the DM at different stages in our protocol, from numerics with parameters $(N,\theta,\tau_1,\tau_2,\tau_3)=(1024,2,0.0505, 0.111, 0.0357)$. Starting from the third plot, the distribution is shown separately for the two parts using red($\alpha$) and blue($\beta$). More precisely, the color (red$\rightarrow$green$\rightarrow$blue) quantifies $Q_\alpha-Q_\beta$ (positive$\rightarrow0\rightarrow$negative) with $Q_\alpha$ being the Husimi Q function for the $\alpha$-part $\ket{\psi}$. %The final state achieves GHZ encoding with infidelity $\epsilon\approx 6.7\times 10^{-4}$. 
    }
    \label{fig:idea}
\end{figure}

\emph{3. Rotations $R^Z_{\pi/4} O_{\pi/4} R^X_{-\pi/2}$:} We then rotate the two antipodal points to north and south poles by $R^X_{-\pi/2}$, so that we have already obtained a GHZ-like encoded state where the two parts of the state are squeezed states instead of spin-coherent ones like $\kzero$. It remains to ``unsqueeze'' them. However, we cannot use TAT directly, because if it unsqueezes the state at the north pole for example, it will further squeeze the south-pole state. This is because the two states are both squeezed in the $x$ direction, while TAT has different squeezing directions at the two poles, as one can see from the blue states and black trajectories in Fig.~\ref{fig:idea}(b) \footnote{The $3$-local Hamiltonian in \cite{GHZlike24} has the same squeezing directions at the two poles, so in principle can be used here to unsqueeze. However, to engineer the effective $3$-local interaction, the $2$-local Hamiltonian \eqref{eq:all2allH=} needs to change very rapidly. }. Therefore, before unsqueezing, we first relatively-rotate the two states by $\pi/2$ using $O_{\pi/4}$ (blue $\rightarrow$ green), and then rotate by $R^Z_{\pi/4}$ to align them back to the $x,y$ directions (green $\rightarrow$ orange).

\emph{4. Unsqueeze $S_{\tau_3}$:} Finally, after aligning the two states with the TAT trajectories, we unsqueeze them by $S_{\tau_3}$ with an optimal $\tau_3$. Although the final states are not perfect $\kzero,\ket{\bm{1}}$, the error is small because their support can be made very close to a circle region of minimal size, as shown by red in Fig.~\ref{fig:idea}(b).

\emph{Performance of the protocol.}---
To demonstrate the above ideas and quantify the performance, we simulate the system numerically up to large $N= 2048$%, where the dynamics is constrained in the Dicke manifold
. %Furthermore, we only need to evolve the $\alpha$-part of the state, because the two parts are related by a  
Denoting the final state as $\alpha\ket{0}_0\otimes \ket{\psi}+\beta\ket{1}_0\otimes \cdots $, the $\alpha$-part $\ket{\psi}$ alone determines the infidelity $\epsilon = 1-\abs{\alr{\bm{0}|\psi}}^2$
due to a symmetry of our protocol (see SM).
For a given $N$, we sweep the parameter regime $\tau_1,\tau_2\in[0,0.15], \theta\in[0,2]$, and for each set of these parameters, we numerically optimize $\tau_3\in [0,0.15]$ in the last unsqueezing stage, to maximize the overlap $\abs{\alr{\bm{0}|\psi}}$. 
For a typical set of parameters with $N=1024,\theta=2$ and $\tau_p$s optimized accordingly, we show in Fig.~\ref{fig:idea}(c) that the evolution indeed follows the previous intuitions%the cartoon picture Fig.~\ref{fig:idea}(a,b)
. Moreover, the infidelity is tiny $\epsilon\approx 6.7\times 10^{-4}$! Putting numbers in \eqref{eq:T=theta} yields $T=0.048$ for this parameter set, which is only $\sim 6\%$ of the parallelizing-CNOTs protocol. 

\begin{figure}
    \centering
    \includegraphics[width=0.45\textwidth]{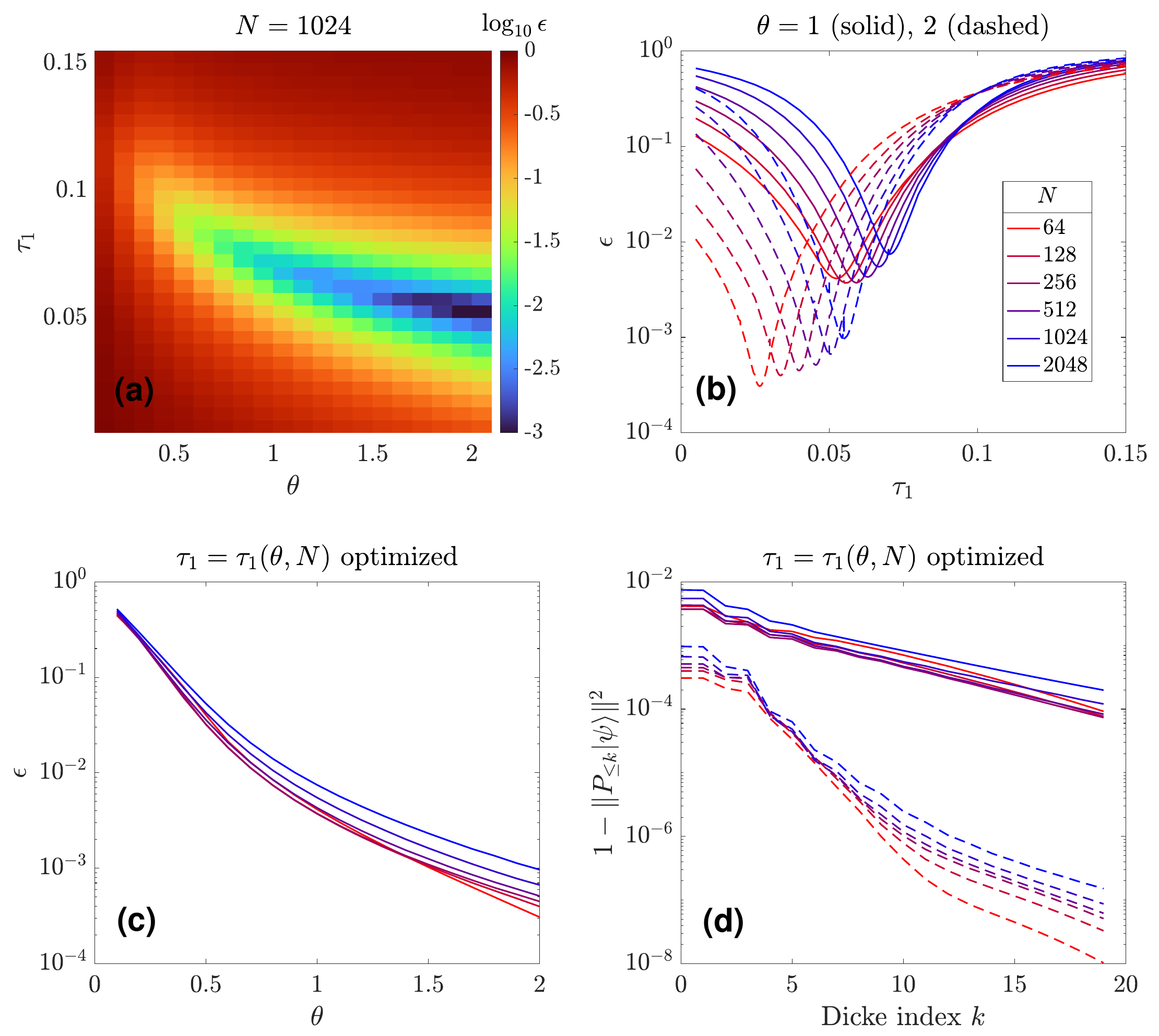}
    \caption{Performance of our protocol where $\tau_2,\tau_3$ have been optimized. (a) $\log_{10}\epsilon$ as a function of $\theta$ and $\tau_1$ for $N=1024$. (b) Solid (dashed) lines: Fixing $\theta=1$ ($\theta=2$), infidelity $\epsilon$ as a function of $\tau_1$, where different color represents different $N$ (same below)%, as shown in the legend
    . The function is minimum at $\tau_1=\tau_1(\theta,N)$. (c) With $\tau_1$ fixed to the optimal $\tau_1(\theta,N)$, $\epsilon$ decays exponentially with $\theta$. (d) The support of the $\alpha$-part final state $\ket{\psi}$ over Dicke states decays with $k$ exponentially, where solid (dashed) lines are still for $\theta=1$ ($\theta=2$), and $\tau_1=\tau_1(\theta,N)$ is optimized. }
    \label{fig:sweep}
\end{figure}

We then study how the performance of our protocol depends on the parameters.
%We first verify that the smallest infidelity indeed happens when $\tau_2$ is close to the predicted value \eqref{eq:tau2=theta}, determined by $\theta,N$ with $c\approx 2$
The optimized $\tau_3$s are reported in SM, where we also show that the optimal $\tau_2$ turns out to be independent of $\tau_1$, which can be understood from semiclassical trajectories. 
Fig.~\ref{fig:sweep} then shows the dependence on the remaining parameters $N,\theta,\tau_1$, where $\tau_2,\tau_3$ have been optimized accordingly. 
For a given $N$ and $\theta$, there exists an optimal squeezing time $\tau_1=\tau_1(\theta, N)$. 
With $N$ fixed, Fig.~\ref{fig:sweep}(a) shows that 
$\tau_1(\theta, N)$ decreases with increasing $\theta$ with a slower and slower slope. The intuition is that %for a smaller $\tau_1$, i.e. 
with less squeezing in the first stage, the controlled-rotation angle needs to be larger to fully separate the two wave packets; if $\tau_1$ is too small $\ll 1$, the $N$-scaling of $\phi$ in \eqref{eq:thetalogN} would be insufficient, and in the extreme case $\phi=\pi/2$ the protocol reduces to the parallelizing-CNOT protocol without squeezing. 

At fixed $\theta$, Fig.~\ref{fig:sweep}(b) shows that $\tau_1(\theta, N)$ drifts to larger values when $N$ increases. We expect this to be a finite-size effect, because $\tau_1$ is upper bounded by $\tau_{\rm min}$ anyway where squeezing is extreme. The infidelity at $\tau_1(\theta, N)$ also slowly increases with $N$, and it is unclear from the finite-size numerics whether it would saturate at some $<1$ value at $N\rightarrow\infty$. Therefore, our protocol is not asymptotically exact, i.e. $\epsilon$ as a function of $N$ does not tends to zero for a fixed $\theta$. Nevertheless, Fig.~\ref{fig:sweep}(c) shows that the infidelity decays \emph{exponentially} when increasing $\theta$, so one can adjust $\theta$ a little bit to compensate the infidelity increase with $N$. To understand this exponential decay, observe that %the error comes from the fact that
the unsqueezing stage cannot cancel the previous squeezing effects perfectly so that error occurs; at larger $\theta$, a smaller squeezing $\tau_1$ is sufficient and is easier to cancel.
%In the second row of Fig.~\ref{fig:sweep}, we fix two representative $\theta$s because the physics for different $\theta$ is similar, and focus on the optimal $\tau_1$. We find that $\tau_1(\theta, N)$ and the error slowly increases with increasing the system size $N$. This suggests that our protocol has vanishing overlap in the thermodynamic limit, which makes sense because two many-body wave functions are very likely to be orthogonal. Nevertheless, this drift is so slow so that the error is very small even for $N=2048$. 
In Fig.~\ref{fig:sweep}(d), we also plot the distribution of the $\alpha$-part $\ket{\psi}$ over Dicke states quantified by $\norm{P_{\le k}\ket{\psi}}^2 := \sum_{\ell=0}^k \abs{\alr{D^N_\ell|\psi}}^2$,
showing an exponential decay of support on $\ket{D^N_k}$ with large $k$. As a result, when using our protocol to enhance qubit interactions, the logical error $\sim \epsilon/N$ \red{is further suppressed by $N$} as advertised, because $\norm{P_{\le 0}\ket{\psi}}^2=1-\epsilon$, so that $\ket{\psi}$ has polarization $\alr{Z}_\psi=\sum_k (N-k)\lr{\norm{P_{\le k}\ket{\psi}}^2-\norm{P_{\le k-1}\ket{\psi}}^2}=N-\order(\epsilon)$ with relative error $\order(\epsilon/N)$; \revise{see SM for a plot of $\alr{Z}_\psi$ whose relative error \emph{decreases} with $N$}.

\emph{Outlook.}---
In this work, we propose a fast and accurate protocol \eqref{eq:whole} for GHZ encoding using all-to-all interactions, opening up the potential for QEC and interaction enhancement in (NISQ) quantum devices. The protocol uses subroutines feasible for experiments: OAT has been realized in many quantum platforms \cite{OAT_ion00,OAT_ion08,OAT_BEC10,OAT_BEC10_1,OAT_cavity10,OAT_cavity18,GHZ_SC19,OAT_optical23,OAT_Ryd_Adam23}, while TAT has been recently demonstrated \cite{TAT_haoqing24,TAT_JunYe24}. The controlled rotation (in a different local basis) can be decomposed as two OAT stages where \revise{the first OAT acts on $N+1$ qubits, and the second OAT (with an overall
negative sign) acts on $N$ qubits to cancel the dynamics within them; this} may require moving the control qubit faraway during the evolution. On the other hand, it may be more natural to let a global bosonic mode like photons to control an ensemble of qubits; in this way our protocol encodes a bosonic quantum state.  %The protocol uses the data qubit to control rotation on the Dicke manifold of the other qubits, and exploits fast squeezing/unsqueezing dynamics to generate \emph{structured} entanglement in a timescale $\torder(1/N)$ that vanishes quickly with $N$. The encoding error is quantitatively small from numerical simulation, and can be systematically improved.% by increasing the protocol time. 

For future directions, it is interesting to see whether fine-tuning the time-dependence of $H(t)$ can further improve the fidelity: Given that the error of our protocol is already small, we conjecture the existence of a protocol with similar time \eqref{eq:T=log2N}, and an infidelity $\epsilon$ that \emph{vanishes} at large $N$.
This \revise{would solve the power-law speed limit problem if true, and} may require a more systematic semiclassical analysis \cite{cohe_mathphy21}. %theory for dynamics on the Dicke manifold, where one controls the \emph{shape} of the wave packet more carefully%to achieve a high quantum fidelity \eqref{eq:eps=0psi}
It is also worth generalizing our protocol \revise{to more realistic settings} beyond the uniform all-to-all case. In SM, we take a first step and show that the protocol \revise{remains nearly perfect for moderate system size $N\lesssim 20$ even when the coupling coefficients are inhomogeneous with $\sim 20\%$ fluctuation, a scenario}
% exhibits a certain level of robustness against inhomogeneity in coupling coefficients, 
relevant in experiments 
\cite{inhomo_Ryd14,inhomo_spinEcho16,inhomo_Ryd23}. Furthermore, %it is interesting to study power-law interactions with exponent $\powerlaw<d$, where a generalized protocol would} potentially transmit information in a way much faster than existing protocols \cite{power_hierarchy,power_GHZ21,power_yifan21}. 
\revise{to generalize to \emph{truly decaying} power-law interactions,} recent developments on spin squeezing in such systems \cite{squeeze_power_Rey20,squeeze_power_PRA22,squeeze_yao23,rotor_spinwave23} would be useful. \revise{Finally, our work opens up possibilities to leverage spin squeezing to accelerate other quantum tasks, e.g. preparing the $W$-state.}

\emph{Acknowledgements.}---
We thank Andrew Lucas, Ana Maria Rey, David T. Stephen and Haoqing Zhang for valuable discussion. This work was supported by the Department of Energy
under Quantum Pathfinder Grant DE-SC0024324.

\bibliography{thebib}

%\comment{
\onecolumngrid

\newpage

%\begin{appendix}
\setcounter{equation}{0}
\setcounter{figure}{0}
\setcounter{page}{1}
\setcounter{section}{0}
\renewcommand{\theequation}{S\arabic{equation}}
\renewcommand{\thefigure}{S\arabic{figure}}
\renewcommand{\thesection}{S\arabic{section}}
\renewcommand{\thepage}{S\arabic{page}}

\begin{center}
    {\large \textbf{Supplementary Material: Fast and Accurate Greenberger–Horne–Zeilinger Encoding Using All-to-all Interactions}}
\end{center}

\section{Further justification of the setup}

\revise{Our $\tilde{\Theta}(1/N)$-time protocol may cause confusion at first glance: If one can already achieve GHZ encoding using standard protocols with $\Theta(1)$ time independent of system size $N$, why should one try a protocol with time that decays with $N$? We argue that there are indeed many good reasons to have a such faster protocol: 

First of all, the time scalings here are all based on the specific normalization of the Hamiltonian we choose, namely each pair of qubits interacts with an $\order(1)$ interaction strength $J_{ij}$. As will be discussed shortly, all-to-all interactions in e.g. trapped ions naturally have a different scaling $J_{ij}\sim 1/N$. In this case, the standard protocol using CNOT gates takes $t=\Theta(N)$ actual time that drastically increases with $N$, so it is a big deal to improve to $t=\tilde{\Theta}(1)$ time, which is achieved by our protocol. We will expand on this normalization issue in Section \ref{sec:normalization}.

Second, even if $J_{ij}$ does not scale with $N$ as in some cavity systems so that our protocol takes time $t\sim 1/N$, its much shorter evolution time leads to a much smaller decoherence error accumulated during the process; see Section \ref{sec:decoh} for a detailed discussion. In particular, the decoherence rate could scale with $N$ because we are preparing a globally entangled GHZ state, so it is particularly vital to shorten the protocol time. As an example mentioned in the main text, for $N\approx 1000$ qubits our protocol takes only $\sim 6\%$ evolution time of the CNOT protocol, making the final decoherence error potentially suppressed beyond one order of magnitude.

Third, as GHZ encoding maps a bit of local information to the global system, it can serve as simple and useful subroutines in a more complicated quantum information processing task: As discussed in the main text, our protocol can be used to enhance interaction rates and encode/decode quantum error correction codes. If the task calls the GHZ encoding subroutine many times, it is demanding to make each GHZ encoding subroutine as fast as possible even if the GHZ encoding time decreases with $N$, so that the total time of the whole task remains under control.

Fourth, as long as one goes beyond the all-to-all case and consider power-law interactions with exponent $\gamma>0$, all previous protocols have a time that grows with system size (when the nearest-neighbor $J_{ij}\sim 1$ is normalized). For example, the CNOT protocol will be constrained by the weakest coupling strength $J_{ij}\sim N^{-\gamma/d}$ at $r_{ij}\sim N^{1/d}$ in $d$ spatial dimensions, making the protocol time $t\sim N^{\gamma/d}$ growing with $N$. In contrast, our protocol makes it possible to do GHZ encoding in again $\order(1)$ (actually $\tilde{\order}(N^{\frac{\gamma}{d}-1})$) time for all $\gamma<d$.

Due to the above reasons, it is desirable to have an all-to-all GHZ encoding protocol that is as fast as possible.
}

\subsection{Arbitrarily fast local rotations}

We have assumed local rotations like $R^Z_\phi$ are instantaneous throughout, which we further justify here.

Theoretically, this is a natural assumption because local rotations do not change entanglement: One can always go to an interaction picture that eliminates all local rotations during evolution, by changing the time-dependent coefficient $J_{ij}^{\xalpha\xbeta}(t)$ of the all-to-all interaction. The final state will be equivalent to the target entangled state up to a local-basis change. \revise{For example, our protocol can be rewritten as \begin{align}\label{eq:whole1}
    U &= \lr{R^Z_{\pi/4} R^X_{-\pi/2}} \lr{R^X_{\pi/2} R^Z_{-\pi/4} S_{\tau_3} R^Z_{\pi/4}R^X_{-\pi/2} } \lr{R^X_{\pi/2} O_{\pi/4} R^X_{-\pi/2}} S_{-\tau_2}\, C_\phi\, S_{\tau_1} \nonumber\\
    &=  \lr{R^Z_{\pi/4} R^X_{-\pi/2}} \ee^{-\ii \tau_3 \frac{\log N}{N} \lr{AB+BA} } \ee^{-\ii \frac{\pi}{16N} Y^2} S_{-\tau_2}\, C_\phi\, S_{\tau_1},
\end{align}
where we have defined $A=(X+Z)/\sqrt{2},B=(X-Z)/\sqrt{2}$ to be polarizations along tilted axes, and used e.g. \begin{equation}
    R^X_{\pi/2} O_{\pi/4} R^X_{-\pi/2}= \exp\lr{-\ii \frac{\pi}{16N} R^X_{\pi/2}Z^2R^X_{-\pi/2}} = \exp\lr{-\ii \frac{\pi}{16N} R^X_{\pi/2}Z R^X_{-\pi/2}\cdot R^X_{\pi/2}ZR^X_{-\pi/2}}= \exp\lr{-\ii \frac{\pi}{16N} Y^2}.
\end{equation}
So without the final on-site rotations $R^Z_{\pi/4} R^X_{-\pi/2}$, our protocol $\lr{R^Z_{\pi/4} R^X_{-\pi/2}}^{-1}U$ achieves GHZ encoding (in a rotated local basis) using purely all-to-all interactions Eq.(1) in the main text. Furthermore, the Hamiltonian is still piecewise-constant in time.
} 

There is also a trick to implement fast rotations without possibly introducing complicated time-dependence in the coefficients: Suppose the system actually contains $2N$ qubits, then rotations on the first $N$ qubits, $R^Z_\phi$ for example, can be implemented by an interaction Hamiltonian $H = \sum_{i=1}^{N} Z_i \otimes \sum_{j=N+1}^{2N}Z_j$ for time $t=\phi/(2N)$, where the state on the last $N$ qubits is fixed in the all-zero state. In this way, by a constant space overhead $N\rightarrow 2N$, all local rotations are realizable in time $\order(1/N)$, which is subdominant than the time $\Omega(\log N/N)$ for GHZ encoding protocols.

In reality, the above trick may be subtle to implement, and one may want to use time-independent Hamiltonians with fixed axis (e.g. one may realize OAT only along the $Z$-axis), so it is probably most convenient to realize the rotations directly by single-qubit gates. Nevertheless, single-qubit gates are usually operated at much faster timescales than interactions. Sometimes this separation of timescale even scales with $N$: In trapped ions for example, all-to-all interactions are generated by coupling ions to the center-of-mass phonon mode \cite{ion_rev21}, so are necessarily suppressed by a factor of $1/N$ due to the $\propto 1/\sqrt{N}$ amplitude of the mode on a given site. In this setting, our protocol has actual evolution time $\torder(1)$, which holds even if the rotations are counted as constant time. In cavity systems, the interaction strength may also decrease with $1/N$ \cite{cavity_Zeyang22,cavity_metro_scram23}. Moreover, the on-site field can be made much larger than even $N\times $ interaction strength \cite{cavity_BCS24}. As a result, in these systems rotations can be effectively viewed as instantaneous.

\subsection{Normalization of the Hamiltonian and evolution time}\label{sec:normalization}

\revise{
At the beginning of this Supplemental Material, we have argued that it is important to search for a GHZ encoding protocol that is as fast as possible, in the space of all protocols generated by an all-to-all Hamiltonian Eq.~(1) in the main text. In this work we construct an explicit protocol that nearly saturates the lower bound (given by Theorem \ref{thm:bound} below), which is faster by a factor of $\sim N$ comparing to the standard CNOT protocol using Hamiltonian \begin{equation}\label{eq:HCNOT}
    H_{\rm{CNOT}}= Z_0(X_1+\cdots + X_N).
\end{equation}

Observe that our protocol Hamiltonian $H(t)$ has an operator norm $\norm{H(t)}\sim N^2$ that is much larger than $\norm{H_{\rm{CNOT}}}\sim N$; if one restricts to Hamiltonians with the same norm $\sim N$, then our protocol Hamiltonian should be normalized as $\frac{1}{N}H(t)$ and no longer provides an advantage over the CNOT protocol anymore. However, this comparison based on the same Hamiltonian norm is not fair: In $H_{\rm{CNOT}}$, the qubit $0$ couples to all the other qubits just as the all-to-all case, so if one can engineer $H_{\rm{CNOT}}$ in a physical system with normalization $J_{0i}\sim 1$, it would be very strange if one cannot obtain an all-to-all interaction among all pairs of qubits with the same coupling strength $J_{ij}\sim 1$. In other words, it is more natural to compare our protocol with \eqref{eq:HCNOT} using the same coupling strength \emph{per pair}.

More generally, we can compare our all-to-all protocol with other known GHZ encoding protocols based on any interaction connectivity (beyond just \eqref{eq:HCNOT}), where one indeed needs to take the Hamiltonian normalization factor into account. Our protocol can still be superior in such cases. To illustrate this point, we focus on the concrete setting of trapped ions. As discussed above, one can engineer an all-to-all interaction among the ions with strength \begin{equation}
    J_{ij}\sim 1/N,
\end{equation}
that decays with $N$, because the ions are only coupled to the single center-of-mass phonon mode. On the other hand, one can couple the ions to more modes to engineer a fairly local Hamiltonian $H_{\rm local}$ in $d$ spatial dimensions with nearest-neighbor coupling $J_{ij}\sim 1$ that does not decay with $N$ \cite{ion_rev21}. Here the two cases do have the same Hamiltonian norm $\sim N$ . Nevertheless, our all-to-all GHZ encoding protocol takes time $t\sim 1$, much faster than any local Hamiltonian evolution that would take time $t_{\rm local}=\Omega(N^{1/d})$. The reason is that information propagates with a bounded velocity under local Hamiltonian $H_{\rm local}$; this is in sharp contrast to \eqref{eq:HCNOT}, which has the same Hamiltonian norm but does not have spatial locality.

}

\section{Proof of the lower bound}
Proposition 9.3 in \cite{ourreview} derives the lower bound Eq.~(3) in the main text for exact protocols $\epsilon=0$; it is straightforward to generalize the proof to approximate cases:
\begin{thm}[adapted from \cite{ourreview}]\label{thm:bound}
For any constant $\delta>0$,
    GHZ encoding with infidelity \begin{equation}\label{eq:eps=1/2-delta}
        \epsilon= 1/2-\delta,
    \end{equation} 
    requires \begin{equation}\label{eq:T>logNN}
        T=\Omega\lr{\frac{\log(\delta N)}{N}}.
    \end{equation}
\end{thm}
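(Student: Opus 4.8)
The plan is to convert the near-success of the encoding into the statement that the data qubit $0$ must have influenced a distant qubit, say qubit $N$, within time $T$, and then bound the rate of such influence by a Lieb--Robinson bound adapted to all-to-all couplings. First I would extract distinguishability of reduced states. Apply $U$ to the two orthogonal inputs and set $\ket{\Psi_0}:=U(\ket{0}_0\otimes\kzero)$ and $\ket{\Psi_1}:=U(\ket{1}_0\otimes\kzero)$. Evaluating the hypothesis $\epsilon=1/2-\delta$ at $(\alpha,\beta)=(1,0)$ and $(0,1)$ gives $|\langle\kzero_\Lambda|\Psi_0\rangle|^2\ge 1-\epsilon$ and $|\langle\bm{1}_\Lambda|\Psi_1\rangle|^2\ge 1-\epsilon$. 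Since qubit $N$ is in $\ket0$ in $\kzero_\Lambda$ and in $\ket1$ in $\ket{\bm{1}}_\Lambda$, writing each $\ket{\Psi_s}$ as its target component plus an orthogonal remainder of squared norm $\le\epsilon$ yields $\langle0|\rho^{(0)}_N|0\rangle\ge 1-\epsilon$ and $\langle0|\rho^{(1)}_N|0\rangle\le\epsilon$, where $\rho^{(s)}_N$ is the reduced state of $\ket{\Psi_s}$ on qubit $N$. Using the projector $P=\ket0\bra0$ in the variational formula for trace distance, $D(\rho^{(0)}_N,\rho^{(1)}_N)\ge(1-\epsilon)-\epsilon=2\delta$.

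Next I would relate this gap to operator spreading. Because $\ket{1}_0\otimes\kzero=X_0(\ket{0}_0\otimes\kzero)$, we have $\ket{\Psi_1}=\widetilde X_0\ket{\Psi_0}$ with $\widetilde X_0:=UX_0U^\dagger$ the Heisenberg evolution of $X_0$ under the all-to-all Hamiltonian $H(t)$ of Eq.~(1). For any single-qubit-$N$ operator $M_N$ with $\norm{M_N}\le1$, $|\tr{M_N(\rho^{(0)}_N-\rho^{(1)}_N)}|=|\bra{\Psi_0}(M_N-\widetilde X_0^\dagger M_N\widetilde X_0)\ket{\Psi_0}|=|\bra{\Psi_0}\widetilde X_0^\dagger[M_N,\widetilde X_0]\ket{\Psi_0}|\le\norm{[M_N,\widetilde X_0]}$, so $2\delta\le D(\rho^{(0)}_N,\rho^{(1)}_N)\le\sup_{\norm{M_N}\le1}\norm{[M_N,\widetilde X_0]}$. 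In words: $\widetilde X_0$ must have acquired $\Omega(\delta)$ weight on qubit $N$ during time $T$.

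I would then bound the right-hand side by the Lieb--Robinson/Dyson series: expanding $\widetilde X_0$ in nested commutators with $H(t)$ and using $|J^{\xalpha\xbeta}_{ij}(t)|\le1$, the order-$k$ term that can contribute to a commutator with an operator at site $N$ is at most $2\,(cT)^k/k!$ times the number of length-$k$ walks from $0$ to $N$ on the complete graph $K_{N+1}$, which is $\le N^{k-1}$; summing over $k\ge1$ gives $\sup_{\norm{M_N}\le1}\norm{[M_N,\widetilde X_0]}\le(2/N)(e^{cTN}-1)$ for a constant $c$ (its value is immaterial). Combining with the previous step, $\delta\le(1/N)(e^{cTN}-1)$, hence $e^{cTN}\ge1+\delta N$ and $T\ge\tfrac1{cN}\log(1+\delta N)=\Omega\!\lr{\log(\delta N)/N}$, which is \eqref{eq:T>logNN}; setting $\delta=0$ recovers the exact bound $T=\Omega(\log N/N)$ and Proposition~9.3 of \cite{ourreview}, so this is the claimed straightforward generalization.

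The main obstacle is the walk-counting in the last step: one must check that the complete-graph combinatorics produces simultaneously the $1/N$ prefactor and the logarithm, i.e.\ that the effective Lieb--Robinson ``velocity'' scales like $\Theta(N)$ precisely because each pair is normalized to $|J^{\xalpha\xbeta}_{ij}|\le1$ rather than $\sim1/N$; the rest is routine bookkeeping. The only conceptually new ingredient relative to the exact case is the first step --- reading distinguishability off a single measurement outcome (the one-sided bounds $\langle0|\rho^{(0)}_N|0\rangle\ge1-\epsilon$, $\langle0|\rho^{(1)}_N|0\rangle\le\epsilon$) rather than off fidelity --- since for $\epsilon\approx1/2$ the usual fidelity-to-trace-distance inequality is vacuous. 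This is also why $\epsilon=1/2$ is exactly the threshold: a product unitary (with free local rotations) leaves the data entirely in qubit $0$, forcing $\rho^{(0)}_N=\rho^{(1)}_N$ and hence only $\delta=0$.
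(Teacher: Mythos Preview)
Your proof is correct and follows essentially the same two-step route as the paper: (i) show that approximate GHZ encoding forces a commutator $\norm{[UX_0U^\dagger,M_j]}=\Omega(\delta)$ for some single-site operator on a distant qubit, and (ii) invoke the all-to-all Lieb--Robinson bound (which the paper simply cites from \cite{ourreview}, while you sketch the walk-counting yourself). The only cosmetic difference is in step (i): the paper computes the matrix element $\abs{\bra{0,1}[UX_0U^\dagger,Z_1]\ket{1,0}}\ge 4\delta$ directly using $Z_1\ket{\alpha,\beta}=\ket{\alpha,-\beta}$, whereas you pass through the trace distance of the reduced states on one qubit; both arguments are a few lines. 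Two small slips to fix: $UX_0U^\dagger$ is not literally the Heisenberg evolution of $X_0$ (that would be $U^\dagger X_0 U$), though it is still generated by an all-to-all Hamiltonian with $\abs{J}\le 1$ so the Lieb--Robinson bound applies unchanged; and in your last sentence the exact case corresponds to $\epsilon=0$ (i.e.\ $\delta=1/2$), not $\delta=0$.
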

\begin{proof}
\cite{ourreview} proves that \eqref{eq:T>logNN} is necessary for \begin{equation}\label{eq:UXUZ>1}
    \norm{[UX_0 U^\dagger, Z_1]}\ge 4\delta,
\end{equation}
so we only need to show that any GHZ encoding protocol with \eqref{eq:eps=1/2-delta} satisfies \eqref{eq:UXUZ>1}.

Define projector $P_{\alpha,\beta}:=\ket{\alpha,\beta;0}\bra{\alpha,\beta;0}$. By direct computation, we have \begin{align}\label{eq:compute_UXUZ}
    \bra{-\beta,\alpha}UX_0 U^\dagger Z_1 \ket{\alpha,\beta} &= \bra{-\beta,\alpha}U\mlr{P_{-\beta,\alpha} + (1-P_{-\beta,\alpha})^2} X_0 U^\dagger \ket{\alpha,-\beta} \nonumber\\
    &= \bra{-\beta,\alpha}U\ket{-\beta,\alpha;0} \bra{\alpha,-\beta;0} U^\dagger \ket{\alpha,-\beta} + \bra{-\beta,\alpha}U(1-P_{-\beta,\alpha}) X_0(1-P_{\alpha,-\beta}) U^\dagger \ket{\alpha,-\beta},
\end{align}
where we have used $1-P_{-\beta,\alpha}=(1-P_{-\beta,\alpha})^2$ for the projector in the first line. Similarly, \begin{align}
    \bra{-\beta,\alpha}Z_1 UX_0 U^\dagger \ket{\alpha,\beta} &= -\bra{\beta,\alpha}U\mlr{P_{\beta,\alpha} + (1-P_{\beta,\alpha})^2} X_0 U^\dagger \ket{\alpha,\beta} \nonumber\\
    &= -\bra{\beta,\alpha}U\ket{\beta,\alpha;0} \bra{\alpha,\beta;0} U^\dagger \ket{\alpha,\beta} - \bra{\beta,\alpha}U(1-P_{\beta,\alpha}) X_0(1-P_{\alpha,\beta}) U^\dagger \ket{\alpha,\beta}.  
\end{align} 
Subtracting the above two equations and choosing $\alpha=1,\beta=0$ so that the two first terms are opposite, we have \begin{align}
    \abs{\bra{0,1}[UX_0 U^\dagger, Z_1 ]\ket{1,0}} &\ge 2\abs{\bra{0,1}U\ket{0,1;0} \bra{1,0;0} U^\dagger \ket{1,0}} - 2 \norm{(1-P_{0,1}) U \ket{0,1}}^2 \nonumber\\
    &\ge 2 (1-\epsilon) - 2\epsilon = 2(1-2\epsilon) = 4\delta,
\end{align}
where we have used the definition of $\epsilon$ in the main text Eq.~(5).
This establishes \eqref{eq:UXUZ>1} and thus \eqref{eq:T>logNN}.
\end{proof}

\section{Further details of the protocol}

\subsection{Semiclassical analysis of the TAT dynamics}
Here we briefly review the semiclassical trajectories of TAT, which is useful to understand our protocol.

Define the normalized angular momentum $X^a\cl:=X^a/N$, which satisfy commutation relation $
    [X\cl, Y\cl] = N^{-2}[X,Y]=N^{-2} 2\ii Z=: 2\ii \hbar Z\cl$,
where $\hbar:=1/N$ is the effective Planck constant%(the other two commutators are in a similar form)
. Then $S_\tau = \ee^{-\ii (\tau\log N) H\cl/\hbar}$ where $H\cl := X\cl Y\cl + Y\cl X\cl \approx 2X\cl Y\cl$
can be viewed as a classical Hamiltonian on the unit-sphere phase space if $\hbar\ll 1$, i.e. $N\gg 1$. In this semiclassical limit, roughly speaking, the initial state $\kzero$ corresponds to an ensemble of initial points near the north pole $Z\cl=1$, and the evolved state $\ket{\Psi}:=S_{\tau} \kzero$ corresponds to the ensemble of these points evolved by the classical trajectories of $H\cl$. It turns out that the north pole $Z\cl=1$ is a saddle point in phase space with Lyapunov exponent $\lambda=4$, so the ensemble is squeezed \emph{exponentially} in the $Y$ %(and minus $Y$)
direction, and stretched exponentially in the $X$ direction. 
Since the initial ensemble has width $\alr{\Delta Y}_{\bm{0}}\sim \sqrt{N}$ (here we define $\alr{\OO}_\psi:=\alr{\psi|\OO|\psi}$, and $\alr{\Delta Y}_\psi:=\sqrt{\alr{Y^2}_\psi-\alr{Y}_\psi^2}$.), the extreme (i.e. asymptotically optimal) squeezing $\alr{\Delta Y}_\Psi \sim 1$ is achieved when $\ee^{-\lambda (\tau\log N)}\times 1/\sqrt{N}\approx 1/N$, which leads to \begin{equation}
    \tau=\tau_{\rm min} \approx 1/8.
\end{equation}

\comment{
): \begin{equation}\label{eq:extremeSS}
    \frac{\alr{\Delta Y}_\Psi}{\alr{Z}_\Psi}=\Theta\lr{\frac{1}{N}},
\end{equation}
where $\ket{\Psi}=S_{\tau_{\rm min}} \kzero$ is the final state,
and $\alr{\Delta Y}_\Psi:=\sqrt{\alr{Y^2}_\Psi-\alr{Y}_\Psi^2}$,  . The ratio \eqref{eq:extremeSS} is much smaller than that $\Theta( 1/\sqrt{N})$ for the initial spin-coherent state $\kzero$. This squeezing dynamics
}

This semiclassical analysis implies that in our protocol, $\tau_2$ is roughly determined by $N,\theta$: Since $S_{\tau_2}$ stretches an initial distance $2N\phi$ to roughly $N$, $\ee^{\lambda(\tau_2 \log N)}\times N\phi \approx cN$, so that \begin{equation}\label{eq:tau2=theta}
    \tau_2\approx \frac{\log(c/\phi)}{4\log N}=\frac{\log(cN/\theta)-2\log(\log N)}{4\log N}.
\end{equation}
Here the constant $c=\Theta(1)$ is a subdominant contribution, and comes from the the fact that the exponential acceleration behavior $\ee^{\lambda \tau \log N}$ gets modified away from the saddle point.

\subsection{Symmetry of the protocol}
Write the protocol $U$ by $U=U_{\rm later} U_{\rm sep}$ where $U_{\rm sep}=C_\phi S_{\tau_1}$, and $U_{\rm later}$ is the later stages in Eq.~(6) in the main text. Since $U_{\rm sep}(\ket{z}_0\otimes\kzero)= \ket{z}_0\otimes U_{\mathrm{sep},z}\kzero$ ($z=0,1$) where the two $z$s are related by a $\pi$-rotation symmetry: $U_{\mathrm{sep},1} = R^Z_{\pi} U_{\mathrm{sep},0}$, we have \begin{align}\label{eq:U=RX+YU}
    U \ket{\alpha,\beta;0} &= \alpha \ket{0}_0\otimes U_{\rm later}U_{\mathrm{sep},0} \kzero + \beta\ket{1}_0\otimes U_{\rm later}R^Z_{\pi} U_{\mathrm{sep},0}\kzero \nonumber\\ &= \alpha \ket{0}_0\otimes U_{\rm later}U_{\mathrm{sep},0} \kzero + \beta\ket{1}_0\otimes R^{X+Y}_{\pi} U_{\rm later} U_{\mathrm{sep},0}\kzero \nonumber\\ &=: \alpha \ket{0}_0\otimes \ket{\psi} + \beta\ket{1}_0\otimes R^{X+Y}_{\pi}\ket{\psi},
\end{align}
where $R^{X+Y}_{\pi}$ is $\pi$-rotation along $\hat{x}+\hat{y}$ direction. In the last step of \eqref{eq:U=RX+YU}, we have used the following commutation relations \begin{align}
    U_{\rm later}{\color{red}R^Z_{\pi}} &= S_{\tau_3} \lr{R^Z_{\pi/4} O_{\pi/4} R^X_{-\pi/2}}{\color{red}R^Z_{\pi}} S_{-\tau_2} \nonumber\\&= S_{\tau_3} \lr{R^Z_{\pi/4} O_{\pi/4}{\color{red}R^Y_{\pi}} R^X_{-\pi/2}} S_{-\tau_2} \nonumber\\
    &= S_{\tau_3} \lr{R^Z_{\pi/4} {\color{red}R^Y_{\pi}} O_{\pi/4} R^X_{-\pi/2}} S_{-\tau_2} \nonumber\\
    &= S_{\tau_3} {\color{red}R^{X+Y}_{\pi}} \lr{R^Z_{\pi/4}  O_{\pi/4} R^X_{-\pi/2}} S_{-\tau_2} \nonumber\\
    &= {\color{red}R^{X+Y}_{\pi}} S_{\tau_3} \lr{R^Z_{\pi/4}  O_{\pi/4} R^X_{-\pi/2}} S_{-\tau_2} = {\color{red}R^{X+Y}_{\pi}} U_{\rm later},
\end{align}
where the third line comes from $\lr{R^Y_\pi}^\dagger=R^Y_\pi$ and $R^Y_\pi Z^2 R^Y_\pi = (R^Y_\pi Z R^Y_\pi)^2= (-Z)^2=Z^2$; the final line is similarly from $H_{\rm TAT} \propto (X+Y)^2-(X-Y)^2$.
Taking inner product between \eqref{eq:U=RX+YU} and the goal $\ket{\alpha,\beta}$, we have \begin{equation}
    \epsilon=1-\min_{\alpha,\beta}\abs{\abs{\alpha}^2 \alr{\bm{0}|\psi} + \abs{\beta}^2 \alr{\bm{1}|R^{X+Y}_\pi|\psi}}^2 = 1-\min_{\alpha,\beta}\abs{\lr{\abs{\alpha}^2 + \abs{\beta}^2}\alr{\bm{0}|\psi}}^2 = 1-\abs{\alr{\bm{0}|\psi}}^2,
\end{equation}
claimed in the main text.

\subsection{Additional numerical data}

\begin{figure}
    \centering
    \includegraphics[width=0.8\textwidth]{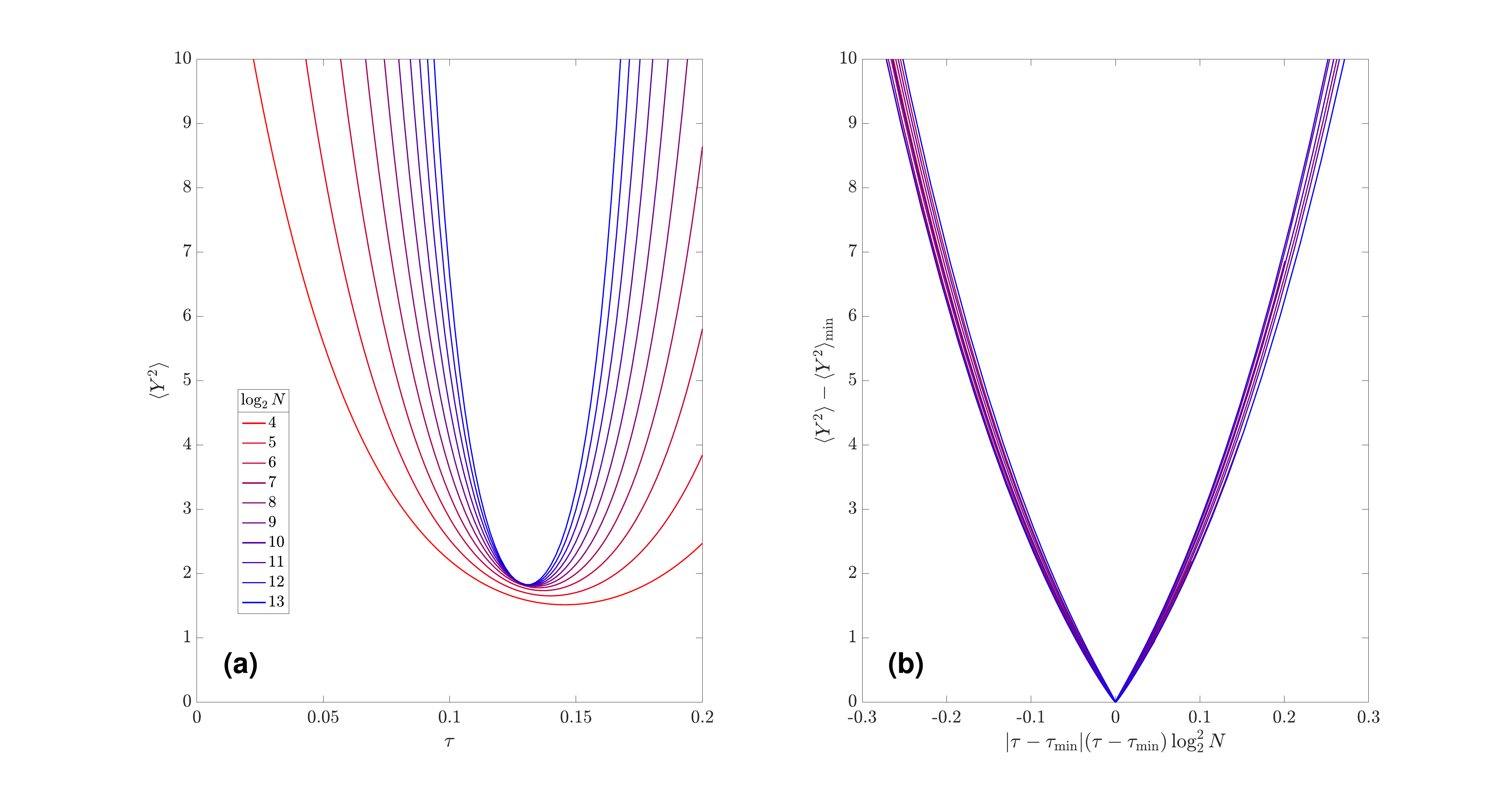}
    \caption{(a) TAT dynamics $S_\tau$ evolves initial state $\kzero$ in rescaled time $\tau$ to a squeezed state with small $\alr{\Delta Y}^2=\alr{Y^2}$ (since $\alr{Y}=0$), which achieves minimum at $\tau=\tau_{\rm min}\approx 1/8$. Here $\tau_{\rm min}$ is numerically determined for each $N$ (each color). The collapse of curves at $\tau \approx 1/8$ indicates the squeezing is extreme $\Delta Y =\Theta(1)$ at that point. (b) The same data but with axes rescaled, where $\alr{Y^2}_{\rm min}$ is the value of $\alr{Y^2}$ at $\tau_{\rm min}$. The collapse of curves shows that close to the extreme point, $\alr{\Delta Y}=\Theta( \log N)$ for any nonvanishing $\tau-\tau_{\rm min}$. This leads to an extra $\log N$ factor in the controlled-rotation angle $\phi$.}
    \label{fig:TAT}
\end{figure}

\begin{figure}
    \centering
    \includegraphics[width=\textwidth]{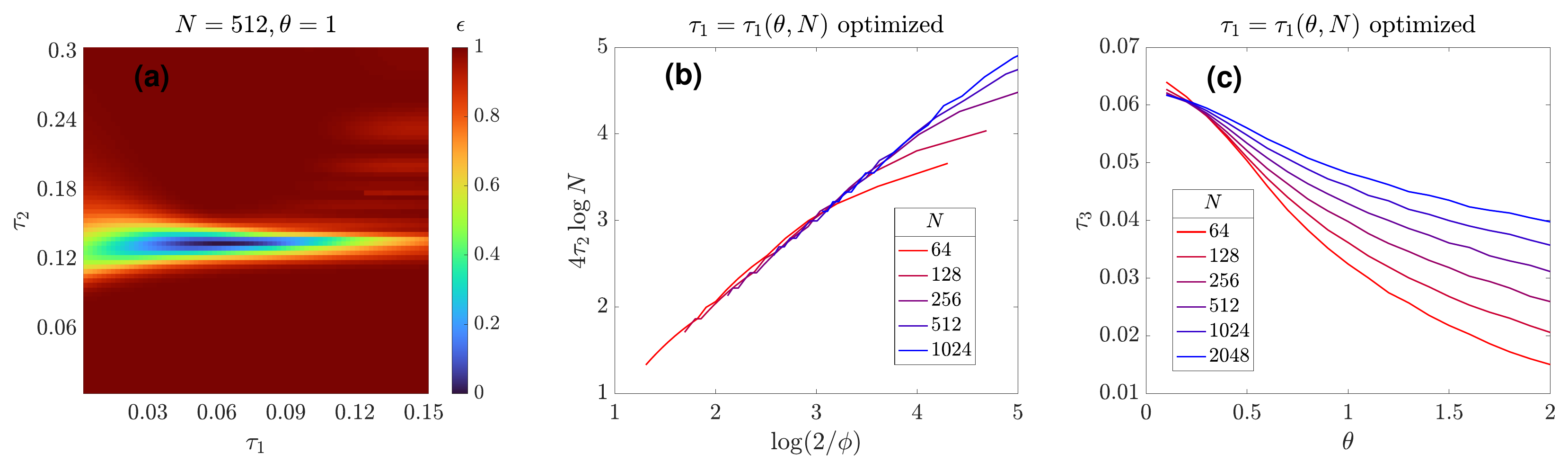}
    \caption{(a) For fixed $N,\theta$ in the figure title, the encoding infidelity is small in a very narrow window of $\tau_2$, which almost does not depend on $\tau_1$. (b) The optimal $\tau_2$ as a function of $N,\theta$, where $\tau_1$ is chosen at $\tau_1(\theta,N)$: the $\tau_1$ value of the optimal pixel in plots like (a). The collapse of curves on the slope-$1$ line verifies prediction \eqref{eq:tau2=theta} for $\tau_2$ with constant $c=2$. (c) The optimized $\tau_3$ values for the duration of the last unsqueezing stage $S_{\tau_3}$, which minimizes the infidelity $\epsilon$ for each set of $(N,\theta,\tau_1,\tau_2)$. Here $\tau_1,\tau_2$ are chosen as the optimal ones for each $(N,\theta)$. }
    \label{fig:tau2tau3}
\end{figure}

In the first stage of our protocol, we squeeze the initial state in order to separate it to two parts in short time. More precisely, the controlled-rotation angle $\phi$ needs to be much larger than the squeezed quantum fluctuation $\Delta Y/N$ (say by a $\log N$ factor). Although $\Delta Y=\Theta(1)$ at extreme squeezing $\tau_{\rm min}\approx 1/8$, we find that beyond (but near) this particular time, the scaling becomes $\Delta Y=\Theta(\log N)$, as shown in Fig.~\ref{fig:TAT}. This is the reason that we choose $\phi\sim \log^2 N/N$, because we do not want to work at extreme squeezing, which is hard to ``unsqueeze''.

In Fig.~\ref{fig:tau2tau3}(a,b), we verify that our protocol performs well at $\tau_2$ around the predicted value \eqref{eq:tau2=theta}, which does not depend on $\tau_1$; the results are similar for other values of $N,\theta$. 

We always numerically optimize parameter $\tau_3$ to maximize the overlap $\abs{\alr{\bm{0}|\psi}}$ in the last unsqueezing stage; the obtained values are shown in Fig.~\ref{fig:tau2tau3}(c), where for a given set of $(N,\theta)$, we focus on $\tau_1,\tau_2$ that minimize the final infidelity. We find that $\tau_3$ decreases with (increasing) $\theta$ and increases with $N$. The reason is that $\tau_3$ is determined by how squeezed the state is after the pulling-away stage $S_{-\tau_2}$: more squeezing requires larger $\tau_3$ to unsqueeze its effect. As a result, $\tau_3$ should grow with $\tau_2$ that is roughly the previous squeezing time. The behavior of $\tau_3$ then agrees with \eqref{eq:tau2=theta}: $\tau_2\approx 1/4-\log(\theta)/\log N$ decreases with $\theta$ and increases with $N$.

\revise{In the main text, we have argued $\alr{Z}_\psi=N-\order(\epsilon)$ with a small relative error $\sim \epsilon/N$ based on the exponential-decaying support of $\ket{\psi}$ on the Dicke states. In Fig.~\ref{fig:Z}, we plot $\alr{Z}_\psi$ directly, and observe that its relative error decays as a power law with $N$ for all system sizes $N\lesssim 2000$ studied. This is in contrast to the GHZ encoding error $\epsilon$, which slightly grows with $N$ at large $N$ (see Fig.~3(b) in the main text). This relative error $\sim \epsilon/N$ of $\alr{Z}_\psi$ makes the error of the enhanced two-qubit gate extremely small. Furthermore, this relative error also quantifies whether a signal from the original qubit $0$ (the data qubit) fully propagates to any other qubit $i>0$, because one can measure $\alr{Z_i}_{\widetilde{\psi}}=\alr{Z}_{\widetilde{\psi}}/N$ in the final state $\widetilde{\psi}$ to tell whether the original qubit was in $\ket{0}$ or $\ket{1}$. If the relative error shown in Fig.~\ref{fig:Z} continues to decay (or saturates at a constant) with $N$, our protocol would saturate (up to logarithmic factors) the Lieb-Robinson-type bound $t=\tilde{\Omega}(N^{\frac{\gamma}{d}-1})$ derived in \cite{Wprotocol_gorshkov20} for signaling using power-law interactions with exponent $\gamma<d$. This is the only regime where the ultimate speed limit is still an open problem in mathematical physics \cite{power_LRB21,ourreview}, where our protocol strongly suggests the solution. }

\begin{figure}
    \centering
    \includegraphics[width=0.4\textwidth]{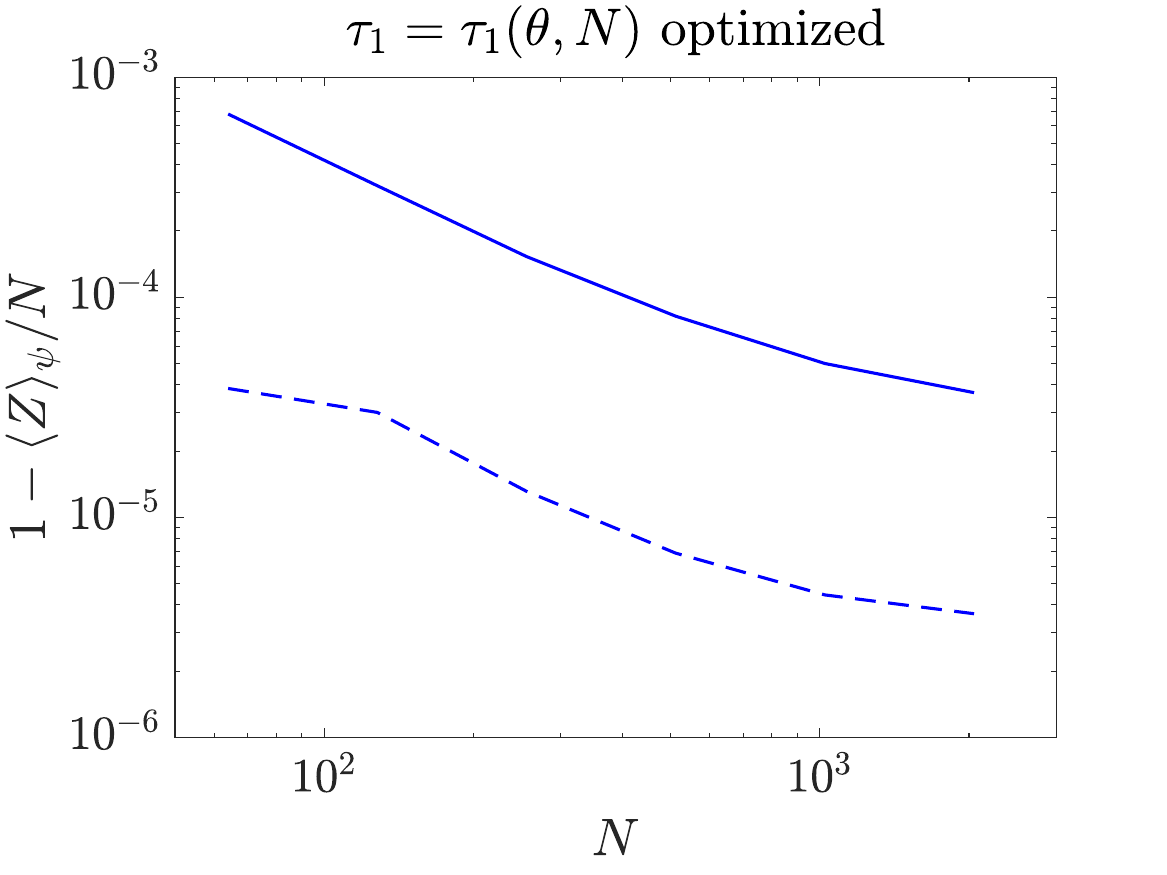}
    \caption{\revise{Relative error of the total $Z$ polarization in the $\alpha$-part final state $\ket{\psi}$ at $\theta=1$ (solid) and $\theta=2$ (dashed), where $\tau_p$s are optimized. The two axes are in log scale.} }
    \label{fig:Z}
\end{figure}

\section{Robustness of our protocol}

\subsection{Faster evolution leads to smaller decoherence error}\label{sec:decoh}

\revise{
Here we explain quantitatively that our fast protocol accumulates less decoherence error comparing to e.g. the $\Theta(1)$-time protocols, a desirable property when using the protocol to encode/decode quantum error correction codes. Let $\rho$ denote the system density matrix, we expect \begin{equation}
    \norm{\delta \rho}_1\sim \Gamma T,
\end{equation} 
for error $\delta \rho$ on the density matrix at weak decoherence, where $\Gamma$ is a global decoherence rate time-averaged over the process. %Although our protocol generates almost extreme squeezing (as we will see) that dephases much faster than product states, the dephasing rate is at  
Since most of the time our protocol stays in a very squeezed state, which is nearly the most vulnerable state under decoherence, $\Gamma\approx \Gamma_{\rm max}$ is roughly the maximal possible decoherence rate. However, since the target GHZ state also decoheres in the fastest possible way, a similar rate holds for the parallelizing-CNOTs protocol. Therefore, comparing to that $\Theta(1)$-time protocol, our error $\norm{\delta \rho}_1$ is much smaller by a factor of $T=\torder(1/N)$. 
}

\subsection{Robustness of our protocol against inhomogeneous coupling coefficients}

\begin{figure}
    \centering
    \includegraphics[width=0.95\linewidth]{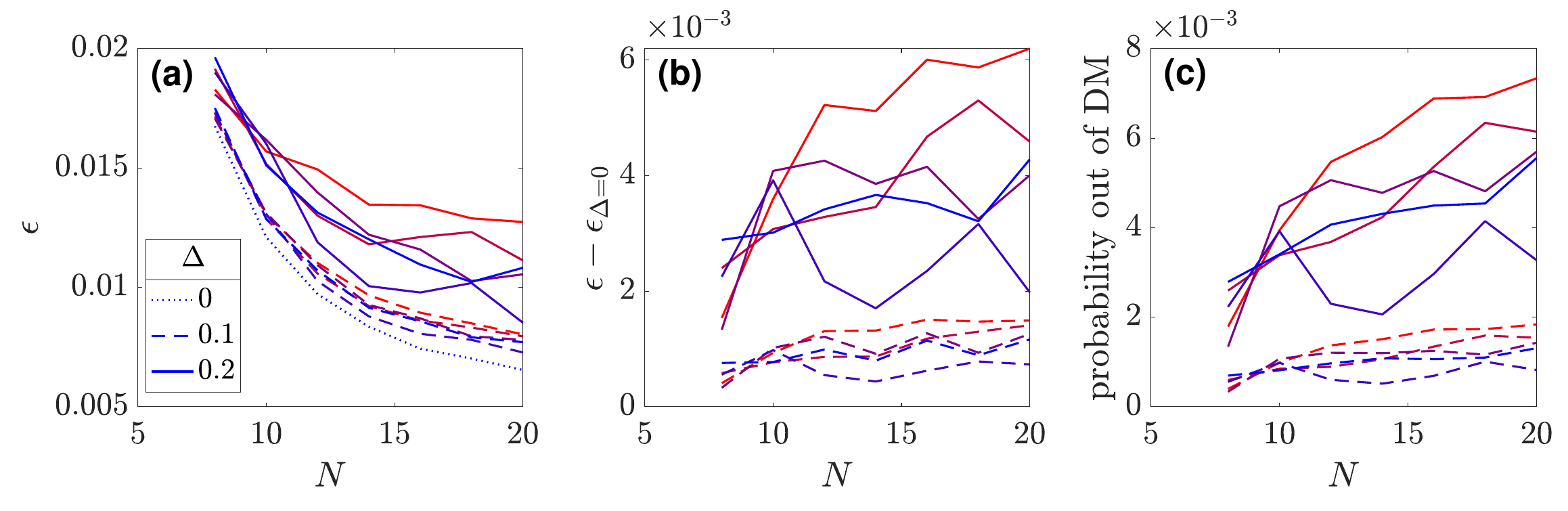}
    \caption{Performance of our protocol where the TAT Hamiltonian is replaced by \eqref{eq:TAT_inho} with inhomogeneous coupling coefficients. We choose $\theta=1$ so that for the range $N\in [8,20]$ studied, the runtime $T$ is shorter than the parallelizing-CNOTs protocol by a factor $\lesssim 1/3$. $\tau_1,\tau_2,\tau_3$ are optimized. The different colors represent five different realizations of the disorder. Note that in the range studied, the infidelity for the ideal case $\epsilon_{\Delta=0}$ decreases with increasing $N$, in contrast to Fig.~1 in the main text at larger $N$.  }
    \label{fig:robust}
\end{figure}

Since our protocol is understood via semiclassical dynamics in the permutation-invariant subspace, the DM of the $N$ qubits, an important question arises: Does our protocol rely crucially on the permutation symmetry? This is related to potential generalizations to e.g. power-law interacting systems. Here we provide some evidence showing certain robustness of our protocol when the permutation symmetry is broken.

We assume that the coupling coefficient is inhomogeneous (disordered) for the TAT subroutines (for simplicity we do not add disorder to the other subroutines): Instead of $H_{\rm TAT}=XY+YX$, the Hamiltonian becomes \begin{equation}\label{eq:TAT_inho}
    H_{\rm TAT}' = \sum_{i}\sum_{j\neq i} J_{i j} (X_iY_j+Y_i X_j).
\end{equation}
For each pair of $i,j$, we independently choose $J_{ij}\in [1-\Delta, 1+\Delta]$ from the uniform distribution in the interval, where $\Delta$ tunes the disorder strength. Since the dynamics is no longer constrained in the DM, we perform numerics by simulating the total Hilbert space for system size up to $N=20$. This system size is already relevant for most current experiments, and the result is reported in Fig.~\ref{fig:robust}.

Naively, one may worry that since DM is only an exponentially small subspace in the total Hilbert space, once the permutation symmetry is broken, the state could easily exit the subspace and never come back. This would lead to failure of our protocol $\epsilon\approx 1$, and the best one can hope for would be a GHZ-like encoding protocol, where the final state is of the form $\alpha \ket{\text{very positively polarized}} + \beta \ket{\text{very negatively polarized}} $. However, this is not what we found in Fig.~\ref{fig:robust}(a) for weak disorder $\Delta$. The infidelity is only slightly increased for $\Delta=0.1$, and remains at the same order of magnitude ($\epsilon\sim 10^{-2}$ for $N=20$) comparing to the ideal case $\epsilon_{\Delta=0}$ at $\Delta=0.2$. The same data is plotted in Fig.~\ref{fig:robust}(b) in terms of the extra infidelity $\epsilon-\epsilon_{\Delta=0}$ due to inhomogeneity, which remains at the order of $10^{-3}$. Remarkably, this is even much smaller than the naive perturbation strength $\Delta$ to the permutation-invariant case. Furthermore, although $\epsilon-\epsilon_{\Delta=0}$ seems to grow with $N$, this effect is mild and looks like $\sim \log N$, which suggests $\epsilon$ may still be small at larger $N\sim 10^2$. 

In Fig.~\ref{fig:robust}(c), we plot the support probability of the state out of the DM during evolution. More precisely, we consider each of the seven time steps between two subroutines in the protocol, and take the maximum of the outside support. Since only the TAT subroutines do not preserve the permutation symmetry, this reduces to only two time steps: after $S_{\tau_1}$ and after $S_{-\tau_2}$. We find that this outside probability is nearly the same (albeit slightly larger) as $\epsilon-\epsilon_{\Delta=0}$ in Fig.~\ref{fig:robust}(b). This indicates that the extra infidelity is almost due to the part of the wave function that is kicked out of the DM, which turns out to be even much smaller than the disorder strength $\Delta$. The robustness of our protocol above thus stems from this robustness of the DM, which is worth a more thorough understanding that we leave as future work. Note that a similar robustness behavior has been studied recently in the different setting of power-law interactions \cite{rotor_spinwave23}.

%}

\end{document}